\newcommand{\norm}[1]{\left\lVert#1\right\rVert}
\newcommand{\ind}[1]{\mathbbm{1}_{\{#1\}}}
\DeclareMathOperator{\E}{\mathbb{E}}
\DeclareMathOperator{\p}{\mathbb{P}}
\newtheorem{definition}{Definition}
\newtheorem{theorem}{Theorem}
\newtheorem{corollary}{Corollary}
\newcommand{\mcM}{{\mathcal M}}
\title{Exact Privacy Guarantees for Markov Chain Implementations of the Exponential Mechanism with Artificial Atoms}
\author{%
  Jeremy Seeman \\
  Department of Statistics \\
  Penn State University \\
  State College, PA 16803 \\
  \texttt{jhs5496@psu.edu} \\
  \And
  Matthew Reimherr \\
  Department of Statistics \\
  Penn State University \\
  State College, PA 16803 \\
  \texttt{mlr36@psu.edu} \\
  \And
  Aleksandra Slavkovi\'{c} \\
  Department of Statistics \\
  Penn State University \\
  State College, PA 16803 \\
  \texttt{abs12@psu.edu} \\
}
\begin{document}

\maketitle

\begin{abstract}
  Implementations of the exponential mechanism in differential privacy often require sampling from intractable distributions. When approximate procedures like Markov chain Monte Carlo (MCMC) are used, the end result incurs costs to both privacy and accuracy. Existing work has examined these effects asymptotically, but implementable finite sample results are needed in practice so that users can specify privacy budgets in advance and implement samplers with exact privacy guarantees. In this paper, we use tools from ergodic theory and perfect simulation to design exact finite runtime sampling algorithms for the exponential mechanism by introducing an intermediate modified target distribution using artificial atoms. We propose an additional modification of this sampling algorithm that maintains its $\epsilon$-DP guarantee and has improved runtime at the cost of some utility. We then compare these methods in scenarios where we can explicitly calculate a $\delta$ cost (as in $(\epsilon, \delta)$-DP) incurred when using standard MCMC techniques. Much as there is a well known trade-off between privacy and utility, we demonstrate that there is also a trade-off between privacy guarantees and runtime. 
\end{abstract}

\section{Introduction}
\subsection{Problem setup}

The exponential mechanism \cite{mcsherry2007mechanism} is one of the workhorses of $\epsilon$-differential privacy ($\epsilon$-DP). Many common DP mechanisms, such as the Laplace Mechanism \cite{dwork2006calibrating}, K-norm and K-norm gradient mechanisms \cite{reimherr2019kng,awan2020structure}, pMSE mechanism \cite{snoke2018pmse}, and even posterior sampling under certain regularity conditions \cite{wang2015privacy}, can all be viewed as instances of the exponential mechanism. In fact, any mechanism can be equivalently expressed as an instance of the exponential mechanism depending on the choice of loss function and base measure \cite{awan2019benefits}. However, the output of the exponential mechanism often has an intractable distribution, making it difficult to sample from in practice. This necessitates using a sampling algorithm such as Markov chain Monte Carlo (MCMC) to computationally generate an approximate sample from the distribution. However, methods like these only approximate the target distribution asymptotically, meaning that the resulting sample need not satisfy the originally prescribed privacy guarantees.

In practice, analyzing the privacy cost incurred due to sampling is a messy problem. Existing approaches rely on analyzing convergence rates of distances between the approximating distribution and the target distribution. While these approaches have produced nice theoretical results, the analyses are usually asymptotic and lack closed form expressions for constants required to determine finite-sample incurred costs. However, for uniformly ergodic chains that admit atoms in their state space, these sampling procedures are not only more directly quantifiable but also admit finite runtime algorithms for exact simulation from the target distribution. This suggests that modifying the target distribution of the exponential mechanism to include an artificial atom may eliminate the privacy cost of implementing the sampling algorithm; such modifications are the focus of this paper. 

\subsection{Contributions}

Let $f_X(y)$ be the output density of the exponential mechanism with confidential data $X$ and output space $y \in \mathcal{Y}$ (we will fully define this notation in the methods section). In this paper, we consider sampling from $f_X$ by first sampling from a modified target density as a mixture of the form $k \in [0, 1]$:
$$
\tilde{f}_X(y) = (1-k) f_X(y) + k g_X(y).
$$
In particular, we introduce two different methods which we will give names and briefly describe:
\begin{itemize}
    \item \texttt{ConfAtomPerfect} $g_X(y)$ is a point mass at the confidential data output, and the target distribution is $\tilde{f}_X(y)$ conditioned on NOT releasing the confidential data, yielding an exact draw from $f_X$.
    \item \texttt{RandomAtomPerfect} $g_X(y)$ is the mass function for an implementation of the discrete exponential mechanism over a finite test point space. 
\end{itemize}

Our contributions are as follows:

\begin{enumerate}
    \item We use classical methods from MCMC theory to derive exact sampling algorithms for instances of the exponential mechanism with $(\epsilon, \delta)$-DP guarantees. 
    \item We propose an exact finite-runtime algorithm (\texttt{ConfAtomPerfect}) for implementing the exponential mechanism with $\epsilon$-DP guarantees using atomic regeneration. 
    \item We derive two modifications of the previous algorithm (i.e., \texttt{RandomAtomPerfect} and \texttt{ConfAtomAndRuntimePerfect}) that satisfy $\epsilon$-DP. The first modifies the target distribution and does not resample to sample from $f_X$, and the second allows the implementation to be independent of runtime, offering even stronger privacy guarantees.
    \item We compare the proposed methods for a few explicit examples to demonstrate a new three way trade-off between, privacy, utility, and runtime. 
\end{enumerate}

\subsection{Related literature}

Results from MCMC theory have addressed privacy loss due to approximation by analyzing convergence rates of total variational distance \cite{minami2016differential} and R\'enyi divergence \cite{ganesh2020faster}. In the latter paper, the authors argue against characterizing the distance between the MCMC sampled distribution and the target distribution in terms of total variation because it lacks the privacy-preserving compositional properties of R\'enyi-DP, allowing for the derivation of convergence results with better asymptotic dependence on the dimension of the output space. Our work differs in a few key directions. First, we require weaker assumptions about the loss function. In particular, $V$-geometric ergodicity only requires sub-exponential tails, which relaxes strict convexity and/or Lipschitz assumptions about the loss function. For our perfect sampler, we only require uniform ergodicity of the original chain, which is often achieved by virtue of bounding the database space. Second, we consider simpler classes of base MCMC algorithms that may be sufficient for a wide class of problems where the output dimension is smaller than the input dimension. Third, our work derives algorithms with exact implementation steps and exact privacy guarantees. Previous results state their results in terms of asymptotic sample complexities for the Langevin dynamics step size and run time, neither of which are immediately helpful for practitioners who want to use these algorithms and calculate their provable guarantees. 

\section{Methods}

\subsection{Notation}

We recall some definitions and theorems here:
\begin{definition}[Differential Privacy \cite{dwork2006calibrating}]
Let $\mathcal{M} \triangleq \{ \mu_X \mid X \in \mathcal{X}^n \}$ be a collection of probability measures over a common measurable space $(\mathcal{Y}, \mathcal{F})$ indexed by elements of $\mathcal{X}^n$. The mechanism $\mathcal{M}$ satisfies $(\epsilon, \delta)$-DP if, for all $B \in \mathcal{F}$ and $X, X' \in \mathcal{X}^n$ differing on one entry:
$$
\mu_X(B) \leq e^{\epsilon} \mu_{X'}(B) + \delta .
$$
\end{definition}

\begin{definition}[Exponential Mechanism \cite{mcsherry2007mechanism}]
Let $L_X: \mathcal{X}^n \times \mathcal{Y} \mapsto [0, \infty]$ be a measurable loss function for each $X \in \mathcal{X}^n$. Suppose, for all adjacent $X, X' \in \mathcal{X}^n$ and $y \in \mathcal{Y}$:
$$
|L_X(y) - L_{X'}(y)| \leq \Delta_L < \infty.
$$
The exponential mechanism is the $(\epsilon, 0)$-DP mechanism defined by the collection of measures each with density:
$$
f_X(y) \propto \exp\left(- \frac{\epsilon L_X(y)}{2\Delta_L} \right) ,
$$
with respect to a common base measure $\nu(y)$ over $(\mathcal{Y}, \mathcal{F})$.
\end{definition}
Frequently, $\nu$ is taken to be the Lebesgue measure over a subset of  $\mathbb{R}^d$. Throughout the paper, let $\{ \mu_{m,X} \}_{m=1}^\infty$ be a sequence of probability measures such that $\mu_{m, X}$ converges in distribution to $\mu_X$ for all $X \in \mathcal{X}^n$. We assume the chain is defined by the transition kernel $\Pi_X: \mathcal{Y} \times \mathcal{F} \mapsto [0, 1]$ and the starting value $y_0 \in \mathcal{Y}$. Throughout the paper, we use the concept of minorization:

\begin{definition}[Minorization Condition]
\label{as:minorization}Suppose there exists a function $s: \mathcal{Y} \mapsto [0, 1]$ and probability measure $\nu$ on $(\mathcal{Y}, \mathcal{F})$ such that for all $A \in \mathcal{F}$ and $y \in \mathcal{Y}$:
$$
\Pi_X(y, A) \geq s(y) \nu(A). 
$$
We call these the minorization function and measure, respectively. Define the associated remainder kernel $R_{\nu, s}: \mathcal{Y} \times \mathcal{F} \mapsto [0, 1]$ such that:
$$
\Pi_X(y, A) = s(y)\nu(A) + (1 - s(y)) R_{\nu,s}(y, A).
$$
\end{definition}
Our results rely on the existence of minorization functions and measures for our output distribution $\mu_X$. This assumption will give us uniform ergodicity properties later on. Note that a limitation of our approach is that we cannot apply these methods to exponential mechanisms with unbounded output spaces such as $\mathbb{R}^k$, but it does work for compact sets of $\mathbb{R}^k$. In practice, this limitation is not a major barrier to generalizability, as the the output space is often intentionally bounded to control the sensitivity of the loss function, although there are exceptions (e.g., \cite{snoke2018pmse}). 

The following result shows that MCMC sampling from the exponential mechanism induces a $\delta_\alpha$ penalty as a function of the upper bound on the total variation distance $\alpha$:
\begin{theorem}[\cite{minami2016differential}]
\label{thm:mcmc_delta}A mechanism $\mathcal{M}_m \triangleq \{ \mu_{m,X} \mid x \in \mathcal{X} \}$ approximating the exponential mechanism, $\mcM$, at time $m \geq \tau(\alpha)$ is $(\epsilon, \delta_\alpha)$-DP where $\delta_\alpha \triangleq \alpha(1 + e^\epsilon)$ if
$$
\tau(\alpha) \triangleq \sup_{X \in \mathcal{X}^n} \inf \{ t \geq 0 \mid \norm{\mu_{t,X} - \mu_X}_{\mathrm{TV}}  \leq \alpha \}.
$$
\end{theorem}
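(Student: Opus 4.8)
The plan is to reduce the claim to the exact $(\epsilon,0)$-DP guarantee already enjoyed by the exponential mechanism $\mcM$ itself, and to pay separately for the gap between each MCMC iterate $\mu_{m,X}$ and its stationary target $\mu_X$ using the total variation bound. Throughout I would adopt the convention $\norm{\mu - \mu'}_{\mathrm{TV}} = \sup_{B \in \mathcal{F}} |\mu(B) - \mu'(B)|$, so that a bound of $\alpha$ in total variation translates directly into the event-wise estimate $|\mu(B) - \mu'(B)| \leq \alpha$ for every $B \in \mathcal{F}$.

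First I would unpack the hypothesis $m \geq \tau(\alpha)$. Because $\tau(\alpha)$ is defined as a supremum over all $X \in \mathcal{X}^n$ of the first times at which the $\alpha$-ball is entered, the single scalar condition $m \geq \tau(\alpha)$ simultaneously certifies $\norm{\mu_{m,X} - \mu_X}_{\mathrm{TV}} \leq \alpha$ for every index $X$ at once; in particular it holds for both members of any adjacent pair $X, X'$. This uniformity over the database space is precisely what allows the argument to cover the worst-case adjacent pair appearing in the definition of differential privacy, and it is what makes the threshold specifiable in advance.

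Then, fixing an arbitrary adjacent pair $X, X'$ and an event $B \in \mathcal{F}$, I would chain three inequalities: move from $\mu_{m,X}(B)$ to $\mu_X(B)$ at an additive cost of $\alpha$; apply the exact $(\epsilon,0)$-DP of the exponential mechanism, $\mu_X(B) \leq e^\epsilon \mu_{X'}(B)$; and finally move from $\mu_{X'}(B)$ back to $\mu_{m,X'}(B)$ at a further additive cost of $\alpha$, with this last step picking up the multiplicative factor $e^\epsilon$:
$$
\mu_{m,X}(B) \leq \mu_X(B) + \alpha \leq e^\epsilon \mu_{X'}(B) + \alpha \leq e^\epsilon\bigl(\mu_{m,X'}(B) + \alpha\bigr) + \alpha = e^\epsilon \mu_{m,X'}(B) + \alpha(1 + e^\epsilon).
$$
Since $B$ and the adjacent pair were arbitrary, setting $\delta_\alpha = \alpha(1 + e^\epsilon)$ yields the claimed $(\epsilon, \delta_\alpha)$-DP guarantee for $\mcM_m$.

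The argument is short, so the main obstacle is bookkeeping rather than any deep estimate. The two delicate points are (i) ensuring the total variation normalization matches the stated constant, since an alternative convention carrying a factor of two would instead produce $\delta_\alpha = \tfrac{\alpha}{2}(1 + e^\epsilon)$; and (ii) invoking the supremum in $\tau(\alpha)$ correctly, so that one threshold controls $\mu_{m,X}$ and $\mu_{m,X'}$ for every adjacent pair at the same time, rather than a pair-dependent threshold that could not be committed to before observing the data.
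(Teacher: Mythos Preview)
Your argument is correct and is the standard triangle-inequality proof of this fact. Note, however, that the paper does not actually supply its own proof of this theorem: it is stated as a cited result from \cite{minami2016differential} and used as background, so there is no in-paper proof to compare against.
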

Note that the upper bound on the $\delta$ cost incurred above is specific to the sequence's properties (for example, transition probabilities and starting locations for MCMC). However, the upper bound is not specific to the confidential data $X$. Such generality is necessary to preserve $(\epsilon, \delta)$-DP, as the mixing time can vary based on the confidential data. This means to ensure privacy, one must find realizations of $X$ that are ``least favorable" to ensure privacy in the worst-case scenario.

One common heuristic here is to choose $\delta \leq 1/n$, so that the relative privacy risk violation affects, in the worst case scenario, all information about one individual in the dataset. One common ``fast-mixing" regime for these chains is a geometric rate, where the total variation distance is $O(r^m)$ for some $r \in (0, 1)$:
\begin{definition}[$V$-geometric ergodicity \cite{mengersen1996rates}]
\label{def:vgeom}For a Markov chain where $\mu_{m, X} \to \mu_X$ in distribution as $m \to \infty$, we say the chain is $V$-geometrically ergodic if there exists constants $C \in \mathbb{R}^+$, $r \in (0, 1)$, and a function $V: \mathcal{Y} \mapsto [1, \infty)$ such that for all $m \in \mathbb{N}$ and starting points $y_0 \in \mathbb{Y}$:
\begin{equation}\label{eq:geom_erg}
\norm{\mu_{m, X} - \mu_X}_{TV} \leq \norm{\mu_{m, X} - \mu_X}_{V} \leq CV(y_0) r^{-m}.
\end{equation}
When $V$ is a constant function, we say the chain is uniformly ergodic.
\end{definition}
Many common algorithms, such as Metropolis-Hastings \cite{mengersen1996rates} and common variants such as Hybrid Monte Carlo \cite{roberts1997geometric} and Hamiltonian Monte Carlo \cite{livingstone2019geometric}, are geometrically ergodic under mild regularity conditions. For these algorithms, it immediately follows that for any specific realization $X \in \mathcal{X}^n$, if we choose $m = \Omega(\log(n))$:
\begin{align}
&\norm{\mu_{m,X} - \mu_X}_{TV}  \leq C V(y_0) r^{\log(n)} \implies \delta_\alpha = O(1/n).
\end{align}
Therefore, up to a constant, for a fast mixing sampling algorithm we require at least order $\log(n)$ samples to achieve a negligible $\delta$ for an empirical privacy guarantee. However, there are two complications to formalizing this idea. First, the constant of proportionality for the rate depends on $\epsilon$, $r$, $V$, and $C$. Second, the mixing time depends on $X$, and the privacy guarantee depends on fast mixing for all possible $X \in \mathcal{X}^n$.

\subsection{Properties of Markov chains whose state spaces admit atoms}

The complications above can be alleviated when the target distribution admits an atom, specifically a proper accessible atom with respect to the transition kernel $\Pi_X$:
\begin{definition}[Proper Accessible Atom]
A set $ \gamma \in \mathcal{F}$ is a proper accessible atom if:
\begin{enumerate}
    \item (Regeneration measure) There exists a measure $\mu_{\mathrm{regen}}$ on $(\mathcal{Y}, \mathcal{F})$ such that:$$
\Pi_X(y, A) = \mu_{\mathrm{regen}}(A) \quad \quad \forall y \in \gamma, A \in \mathcal{F}.
$$
\item (Accessibility criteria) $$
\Pi_X(y, \gamma) > 0 \quad \forall y \in \mathcal{Y}.
$$
\item (Recurrence criteria) the chain returns infinitely often to $\gamma$ with probability 1.
\end{enumerate}
\end{definition}
When the state space admits an atom, the constants in Equation \ref{eq:geom_erg} have closed form expressions (see \cite[Theorem~4.1]{mengersen1996rates} for an example). Furthermore, Markov chains that admit atoms have a special split chain representation \cite{nummelin1978splitting}, where the state space is extended with a binary indicator indicating presence in the atom (i.e., $\mathcal{Y} \times \{ 0, 1 \})$. For this split chain, let $\tau$ be the time it takes for the split Markov chain to return to said proper accessible atom after leaving it. Hobert et al. \cite{hobert2004mixture} show that the stationary distribution admits an infinite mixture form:
$$
\mu_X(A) = \sum_{m=1}^\infty \frac{\p(\tau \geq m)}{\E[\tau]} \p(Y_m \in A \mid \tau \geq m).
$$
If one can sample from the mixture components as well as the conditional distributions feasibly (i.e., in finite expected time), then one can sample from $\mu_X$. Following \cite{lee2014perfect}, the authors focus on the singleton case when $\gamma = \{ a \}$ for some $a \in \mathcal{Y}$. Define: 
$$
p(y) \triangleq \Pi_X(y, \{ a \}).
$$
If there exists $\beta > 0$ such that $\inf_{y \in \mathcal{Y}} p(y) > \beta > 0$, then wen can use Algorithm \ref{alg:perfatom} to produce an exact sample from the target distribution.

\begin{algorithm}
\SetAlgoLined
\KwResult{$Y \sim \tilde{\mu}_X$}
\KwIn{ Transition kernel $\Pi_X$, singleton atom $a \in \mathcal{Y}$, minorization $\beta > 0$ } 
Sample $M \sim \mathrm{Geometric}(\beta)$, set $Y_1 = a$ \; 
\For{$m=2$ to $M$}{
  Using the Bernoulli factory algorithm \cite{huber2016nearly}, sample: $$
  Z_m \sim \mathrm{Bernoulli}\left( \frac{1 - \tilde{\Pi}_X(Y_{m-1}, \{a \})}{1 - \beta} \right)
  $$
  \eIf{$Z_m = 1$}{
    Sample $Y_m \sim R_{\xi_a, \tilde{\Pi}_X(\cdot, \{ a\})}(Y_{m-1}, \cdot)$ using rejection:
    \begin{enumerate}
        \item Propose $Y_m^* \sim \tilde{\Pi}_X(Y_{m-1}, \cdot)$.  
        \item Accept if $Y_m^* \neq a$, else go back to 1.
    \end{enumerate}
  }{
    $Y_m = a $
  }
  Release $Y_M$.
}
 \caption{Exact implementation for sampling from a singleton atom mixture \cite{lee2014perfect}}
 \label{alg:perfatom}
\end{algorithm}

Because many target distributions do not admit the atom required, Brockwell \cite{brockwell2005identification} proposes a way to introduce an artificial atom into a new stationary density $\tilde{f}_X$. In our setting, we propose introducing an atom at the confidential response mixed with the $\epsilon$-DP stationary distribution with proportions $k$ and $1-k$, respectively, for some $k \in (0,1)$:
$$
\tilde{f}_X(y) \triangleq (1 - k) f_X(y) + k \ind{y = a},
$$
with respect to a new base measure:
$$
\tilde{\nu} \triangleq (1 - k) \nu + k \xi_a,
$$
where $\nu$ is the original base measure and $\xi_a$ is the Dirac delta function representing a point mass at the confidential data. In particular, if the Metropolis-Hastings kernel has symmetric proposal density $q_X(y, y')$, then one can construct a new Metropolis-Hastings transition kernel $\tilde{\Pi}$ with symmetric proposal density:
$$
\tilde{q}(y, y') = \frac{1}{2} \left[ q_X(y, y') + \ind{y' = a} \right].
$$

\subsection{Extensions to private sampling}

The framework established by \cite{lee2014perfect} requires assumptions that are particularly amenable to privacy preservation. First, the minorization conditions needed to ensure finite runtime are often satisfied by design; most Markov chains are implemented over compact state spaces, where the compactness is enforced so that the exponential mechanism's loss function has bounded sensitivity. Second, we can choose to add an artificial atom at the confidential data, namely any point $a \in \mathcal{Y}$ such that $L_x(a) = 0$ without loss of generality (note that $a$ need not be unique). This yields our first main result in Theorem \ref{thm:confatom}, and Algorithm~\ref{alg:perfatomcomplete}. 

\begin{theorem}[\texttt{ConfAtomPerfect}]
\label{thm:confatom}Suppose that $\mathcal{Y}$ is a compact subset of $\mathbb{R}^d$ and let $L_X$ be a loss function based on confidential data $x \in \mathcal{X}^n$ where $L_X(a) = 0$ for some $a \in \mathcal{Y}$. Suppose $\mu_X(\{ a \}) = 0$. Let $Q$ be a Metropolis-Hastings transition kernel with symmetric proposals, and let $\tilde{Q}$ and $\tilde{q}$ be as defined above. Then:
\begin{enumerate}
    \item There exists a constant $\beta > 0$ such that: 
    $$
    \inf_{y \in \mathcal{Y}} \tilde{\Pi}_X(y, \{ a \}) > \beta.
    $$
    \item There exists an algorithm (implemented by Algorithm \ref{alg:perfatom}) to sample from density $f_X$ that satisfies $\epsilon$-DP with expected number of total proposed samples $N_{\mathrm{prop}}$:
    $$
    \E[N_{\mathrm{prop}}] \leq \frac{48}{k^2 (1 - k)^2 \inf_{y \in \mathcal{Y}} p_\mathrm{Accept}(y)},
    $$
    where:
    $$
    p_{\mathrm{Accept}}(y) \triangleq \int_{\mathcal{Y}} q_X(y,y') \min\left\{1, \frac{f_X(y')}{f_X(y)} \right\} \ d\nu(y').
    $$
\end{enumerate}
\end{theorem}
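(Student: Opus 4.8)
The plan is to treat the two claims separately: a minorization bound showing the artificial atom is uniformly reachable (item 1), and an expected-work bound for the resulting perfect sampler together with its privacy guarantee (item 2).

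For item 1, I would write the one-step probability of landing on the atom as a product of a proposal probability and a Metropolis--Hastings acceptance probability, $\tilde{\Pi}_X(y, \{a\}) = \p(\text{propose } a) \cdot \min\{1, \tilde{f}_X(a)/\tilde{f}_X(y)\}$, where the ratio simplifies because $\tilde{q}$ is symmetric. The proposal probability is bounded below by a fixed constant coming from the $\ind{y' = a}$ component of $\tilde{q}$, uniformly in $y$. The crux is a uniform lower bound on the acceptance ratio. Here I would exploit that $f_X(y) \propto \exp(-\epsilon L_X(y)/(2\Delta_L))$ with $L_X \geq 0$ and $L_X(a) = 0$, so the unnormalized density is maximized at $a$; compactness of $\mathcal{Y}$ (and the boundedness of the loss it enforces) makes the normalizing constant strictly positive and finite, hence $f_X$ is bounded above by a finite constant. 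Since the atom carries mass $k$ while the continuous part is scaled by $1 - k$, the ratio is bounded below by a quantity of order $k / ((1-k)\sup_y f_X(y))$, which is strictly positive and delivers the desired $\beta > 0$.

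For the privacy half of item 2 I would argue that exactness yields the guarantee directly. Algorithm~\ref{alg:perfatom} returns an exact draw from $\tilde{\mu}_X = (1-k)f_X + k\xi_a$; conditioning on not landing on the atom (resampling whenever $Y_M = a$) gives exactly $f_X$, since $\tilde{\mu}_X(\cdot \mid y \neq a) = f_X$. As $f_X$ is by definition the exponential-mechanism density, which is $(\epsilon, 0)$-DP, the released $Y$ inherits the $\epsilon$-DP guarantee verbatim and no approximation penalty $\delta$ arises. I would flag that this concerns only the released value, deferring the data-dependent runtime to the later \texttt{AndRuntimePerfect} variant. For the work bound I would decompose $\E[N_{\mathrm{prop}}]$ over the nested randomized routines and multiply their expected costs via the tower property and Wald's identity: (i) the restart loop, which outputs the atom with probability $k$ and hence repeats an expected $1/(1-k)$ times; (ii) the geometric chain length with $\E[M] = 1/\beta$; (iii) the per-step Bernoulli-factory cost of Huber's near-optimal construction \cite{huber2016nearly}, whose expected number of simulated transitions contributes the explicit numerical constant and a further $1/(1-k)$-type factor; and (iv) the rejection step drawing from the remainder kernel, where realizing one Metropolis--Hastings move costs an expected $1/p_{\mathrm{Accept}}(y)$ proposals. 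Substituting the Part~1 lower bound on $\beta$ (scaling like $k(1-k)$ up to the acceptance geometry) and the uniform bound $\inf_y p_{\mathrm{Accept}}(y)$, and collecting constants, produces the stated $48 / (k^2(1-k)^2 \inf_y p_{\mathrm{Accept}}(y))$.

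The main obstacle is the bookkeeping of constants through these nested samplers. Bounding the Bernoulli factory's expected flips uniformly requires both the lower bound $\beta$ and control on how close $\tilde{\Pi}_X(y, \{a\})$ can approach its extreme values, together with matching Huber's explicit constant to the factor $48$. Tracking precisely how the powers accumulate---one factor of $1-k$ from the restart and one from the remainder-kernel rejection, and two factors of $k$ from $\beta$ and the factory---is the delicate part; the analytic content, namely compactness forcing boundedness of $f_X$, is comparatively routine.
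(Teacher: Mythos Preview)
Your overall plan matches the paper's: establish a uniform lower bound on $\tilde{\Pi}_X(y,\{a\})$, then decompose the total work into the outer geometric loop, the Bernoulli factory, the remainder-kernel rejection, and the restart to discard the atom. The privacy argument via exact conditioning is also exactly right.

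However, you miss the key simplification in Part~1, and this propagates into incorrect bookkeeping for the constant $48$. You bound the acceptance ratio by $k/((1-k)\sup_y f_X(y))$, invoking compactness to make $\sup_y f_X(y)$ finite. But you already observed that the unnormalized density is maximized at $a$; this means $f_X(a)\geq f_X(y)$ for all $y$, so $\tilde{f}_X(a)=(1-k)f_X(a)+k\geq(1-k)f_X(y)+k\ind{y=a}=\tilde{f}_X(y)$ for every $y$. The acceptance ratio is therefore at least $1$ and the $\min$ is identically $1$. The transition probability to $\{a\}$ then reduces to the proposal mass at $a$ under $\tilde\nu$, which is at least $k/2$, independent of $X$ and of $\sup_y f_X(y)$. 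The paper takes $\beta=k/2$ directly. Your weaker bound would leave $\beta$ depending on the normalizing constant $C_X$, which is data-dependent, and would not yield the clean numerical constant.

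Consequently your accounting of the powers of $k$ and $1-k$ is off. With $\beta=k/2$, the outer loop contributes $\E[N_{\mathrm{Outer}}]=2/k$; Huber's factory costs $\E[N_{\mathrm{Bern}}]\leq 12/\beta=24/k$ (an $O(1/k)$ factor, not $O(1/(1-k))$ as you wrote); the remainder-kernel rejection contributes $\E[N_{\mathrm{Inner}}]\leq 1/((1-k)\inf_y p_{\mathrm{Accept}}(y))$; and the restart contributes $\E[N_{\mathrm{Nonatomic}}]=1/(1-k)$. Multiplying gives $48/(k^2(1-k)^2\inf_y p_{\mathrm{Accept}}(y))$. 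So the two factors of $k$ both come from $\beta=k/2$ (one via the geometric length, one via the factory), and the two factors of $1-k$ come from the restart and the remainder rejection; $\beta$ itself does not scale like $k(1-k)$.
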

\begin{proof}
(Sketch; see complete proof in appendix) By construction, the Metropolis-Hastings transition kernel is maximized at $a$ for the modified chain. Taking $\beta \triangleq \frac{k}{2}$ yields the first result. 

\medskip

For the second result, the runtime is split into four components:
$$
\begin{cases}
N_\mathrm{Outer} \triangleq \text{Number of outer loops in Algorithm \ref{alg:perfatom}}, \\
N_\mathrm{Inner} \triangleq \text{Number of rejection proposals to sample from $Y_m^* \sim R_{a,p}(Y_{m-1}, \cdot)$},\\ 
N_\mathrm{Bern} \triangleq \text{Number of Bernoulli factory samples to select $Z_m$ in Algorithm \ref{alg:perfatom}}, \\
N_\mathrm{Nonatommic} \triangleq \text{Number of samples from $\tilde{f}_X$ required to sample from $f_X$}.\\
\end{cases}
$$
In the complete proof, we show:
$$
\begin{cases}
\E[N_\mathrm{Outer}] = \frac{2}{k}, \\
\E[N_\mathrm{Inner}] \leq \left( (1-k) \inf_{y \in \mathcal{Y}} p_\mathrm{Accept}(y) \right)^{-1},  \\
\E[N_\mathrm{Bern}] \leq \frac{24}{k}, \\
\E[N_\mathrm{Nonatommic}] = (1-k)^{-1}. \\
\end{cases}
$$
This yields the desired bound.
\end{proof}

\begin{algorithm}
\label{alg:perfatomcomplete}
\SetAlgoLined
\KwResult{$Y \sim f_X$}
\KwIn{ Sample space $\mathcal{Y}$ and loss function $L_x$ satisfying the conditions of Theorem \ref{thm:confatom} } 
\While{$\mathrm{TRUE}$}{ 
Sample $\tilde{Y} \sim \tilde{f}_x$ using Algorithm \ref{alg:perfatom}. 
\If{$\tilde{Y} \neq a$} 
{ Release $\tilde{Y}$. }
}
\caption{\texttt{ConfAtomPerfect}: $\epsilon$-DP exact sample from exponential mechanism}
\end{algorithm}

We make a couple remarks about this result. First, we note that the bound in the runtime can depend on the confidential data and does not require us to take the infimum over all possible data sets. While this is a nice property, it also assumes that the runtime of the algorithm is not a possible source of side channel information (more on this in the discussion). Second, the result above only uses symmetric proposals so that the criteria for transitioning to the confidential data does not depend on the proposal distribution. However, this result can easily be extended to arbitrary proposals in cases where the minorizing constant can still be calculated. Third, note that we do not include the case where $\mu_X(\{ a \}) > 0$, as this corresponds to an exact implementation of the algorithm derived in \cite{lee2014perfect}. Finally, we anticipate the expected runtime bound above is very loose; the bounds $N_{\mathrm{Bern}}$ and $N_{\mathrm{Inner}}$ are generic, and can be much lower for specific implementations. Moreover, $N_{\mathrm{prop}}$ is loosely bounded above by the product of the terms in the proof sketch, but in practice not every branch of Algorithm \ref{alg:perfatom} is traversed each time.

Algorithm \ref{alg:perfatomcomplete} relies on multiple passes through Algorithm \ref{alg:perfatom} in order to sample from the original target distribution; however, upon failure to do so, we could alternatively sample from a discrete approximation of the exponential mechanism target distribution. This yields a less computationally expensive method, avoiding the additional $(1-k)^{-1}$ factor in implementing the mechanism with pure $\epsilon$-DP, at the expense of some loss in utility due to the discrete approximation. 

\begin{corollary}[\texttt{RandomAtomPerfect}]
\label{cor:randatom}In the same setup as Theorem \ref{thm:confatom}, let $\{ y^{(l)} \}_{l = 1}^\ell \subset \mathcal{Y}$ and let $g_X$ be the mass density of the exponential mechanism implemented on $\{ y^{(l)} \}_{l = 1}^\ell$. Then there exists a sampling algorithm (implemented by Algorithm \ref{alg:randatom}) with target density:
\begin{equation}
\label{eq:randatommix}\tilde{f}_X(y) = (1 - k) f_X(y) + k g_X(y),
\end{equation}
with respect to the mixture measure:
$$
\tilde{\eta} = (1 - k) \nu + \frac{k}{\ell} \sum_{l=1}^\ell \xi_l,
$$
with expected number of total proposed samples $N_{\mathrm{prop}}$:
$$
\E[N_{\mathrm{prop}}] \leq \frac{48}{k^2 (1 - k) \inf_{y \in \mathcal{Y}} p_\mathrm{Accept}(y)},
$$
And utility:
$$
\mu_X(S_\varepsilon) \leq \frac{1 - k}{\nu(S_\varepsilon/2)} \exp\left( -\frac{\epsilon \varepsilon}{4 \Delta_L}\right) + k \exp\left( -\frac{\epsilon}{2 \Delta_L} \left( \varepsilon - \frac{2 \Delta_L}{\epsilon} \log(\ell) \right) \right),
$$
Where $S_\varepsilon \triangleq \{ y \in \mathcal{Y} \mid L_X(y) \geq \varepsilon \}$.
\end{corollary}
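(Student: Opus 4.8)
The plan is to reduce everything to the perfect-sampling machinery already established for Theorem~\ref{thm:confatom}, the only structural change being that the artificial atom is now the finite set $\gamma \triangleq \{y^{(l)}\}_{l=1}^\ell$ rather than a single point, and that upon regenerating into $\gamma$ we output a draw from the discrete exponential mechanism $g_X$ instead of rejecting. Concretely, I would form the modified Metropolis--Hastings kernel $\tilde\Pi_X$ associated with the mixture proposal that, with probability $1/2$, proposes an independent draw supported on the test points (so that $\gamma$ has a state-independent exit law and is a genuine regeneration set) and otherwise uses the original symmetric proposal $q_X$. This parallels the $\tilde q$ of Theorem~\ref{thm:confatom}, and the same ``kernel maximized on the atom'' argument gives a uniform minorization $\inf_{y\in\mathcal{Y}}\tilde\Pi_X(y,\gamma) > \beta$ with $\beta \triangleq k/2$; compactness of $\mathcal{Y}$ together with $\beta>0$ supplies accessibility and recurrence, so $\gamma$ is a proper accessible atom and Algorithm~\ref{alg:perfatom} applies. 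This establishes the existence claim.

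For the runtime bound I would re-use verbatim the four-way decomposition from the proof of Theorem~\ref{thm:confatom}. The bounds $\E[N_\mathrm{Outer}] = 2/k$, $\E[N_\mathrm{Bern}] \le 24/k$, and $\E[N_\mathrm{Inner}] \le ((1-k)\inf_y p_\mathrm{Accept}(y))^{-1}$ depend only on $\beta$, the Bernoulli-factory subroutine, and the acceptance rate of the rejection step, none of which changes here. The single difference is that the outer ``non-atomic'' resampling loop disappears: because hitting $\gamma$ now yields a usable draw from $g_X$ rather than a rejection, $N_\mathrm{Nonatomic} \equiv 1$ instead of having expectation $(1-k)^{-1}$. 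Multiplying the surviving three factors removes exactly one power of $(1-k)$ relative to Theorem~\ref{thm:confatom}, giving $\E[N_\mathrm{prop}] \le 48/(k^2(1-k)\inf_y p_\mathrm{Accept}(y))$.

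The released law has density $\tilde f_X = (1-k)f_X + k g_X$, a fixed-weight mixture of the continuous and discrete exponential mechanisms, both built from the same loss with sensitivity $\Delta_L$ and the same $\epsilon$; since each component is $\epsilon$-DP and the weight $k$ is data independent, the mixture is $\epsilon$-DP. For the utility bound I would split along this same mixture decomposition, so that $\mu_X(S_\varepsilon)$ equals $(1-k)$ times the tail probability of the continuous exponential mechanism plus $k\,g_X(S_\varepsilon)$, and bound each term by the standard exponential-mechanism tail argument. For the continuous part I upper-bound the numerator $\int_{S_\varepsilon}\exp(-\epsilon L_X/2\Delta_L)\,d\nu \le \nu(S_\varepsilon)\,e^{-\epsilon\varepsilon/2\Delta_L} \le e^{-\epsilon\varepsilon/2\Delta_L}$ and lower-bound the normalizing constant by restricting the integral to the low-loss region $\{L_X \le \varepsilon/2\}$, which contributes at least $\nu(\{L_X\le\varepsilon/2\})\,e^{-\epsilon\varepsilon/4\Delta_L}$; the threshold choice $\varepsilon/2$ is exactly what produces the halved exponent $\epsilon\varepsilon/4\Delta_L$ and the good-region measure in the denominator of the first term. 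For the discrete part, the numerator has at most $\ell$ summands each at most $e^{-\epsilon\varepsilon/2\Delta_L}$, while the normalizer is at least $1$ (taking $a$ among the $y^{(l)}$ so that $\min_l L_X(y^{(l)}) = 0$), yielding $g_X(S_\varepsilon) \le \ell\,e^{-\epsilon\varepsilon/2\Delta_L} = \exp(-\tfrac{\epsilon}{2\Delta_L}(\varepsilon - \tfrac{2\Delta_L}{\epsilon}\log\ell))$; summing the two weighted terms gives the stated bound.

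The main obstacle I anticipate is the existence claim rather than the arithmetic: one must verify that a \emph{set}-valued atom genuinely satisfies the proper-accessible-atom definition, i.e.\ that the exit distribution from $\gamma$ does not depend on which $y^{(l)}$ the chain currently occupies, so that the regeneration measure (and hence the mixture interpretation letting Algorithm~\ref{alg:perfatom} output $g_X$) is well defined. This is what forces the independent-proposal construction above, and checking that the resulting kernel both has $\tilde f_X$ as its stationary law and preserves the minorization constant $\beta = k/2$ is the step requiring genuine care; the runtime and utility claims are then essentially bookkeeping on top of the already-established Theorem~\ref{thm:confatom} analysis.
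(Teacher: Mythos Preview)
Your runtime decomposition and your utility argument are essentially the same as the paper's: the paper simply says the runtime follows from the proof of Theorem~\ref{thm:confatom} with the $N_{\mathrm{Nonatomic}}$ factor removed, and for utility it conditions on the mixture component, citing the McSherry--Talwar tail bound for the continuous piece and the Dwork--Roth discrete exponential-mechanism bound for the atomic piece. So on those two parts you match the paper.

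Where you diverge is in the construction itself, and this is where the obstacle you flag is self-inflicted. You propose replacing the singleton atom $\{a\}$ by the \emph{set} $\gamma=\{y^{(l)}\}_{l=1}^\ell$, which then forces you to engineer an independent-proposal component so that the exit law from $\gamma$ is state-free, and to re-verify both stationarity of $\tilde f_X$ and the minorization constant for this new kernel. You correctly identify this as the delicate step, and you do not actually carry it out; in a Metropolis--Hastings chain the acceptance ratio at $y^{(l)}$ depends on $f_X(y^{(l)})$, so different test points will in general have different exit laws even under independent proposals, and making $\gamma$ a genuine atom is not automatic.

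The paper avoids all of this. Algorithm~\ref{alg:randatom} keeps the \emph{same} singleton atom $a$ and the \emph{same} kernel $\tilde\Pi_X$ as in Theorem~\ref{thm:confatom}; it runs Algorithm~\ref{alg:perfatom} once to draw $\tilde Y$ from the intermediate law $(1-k)f_X + k\,\xi_a$, and then simply post-processes: if $\tilde Y\neq a$ release $\tilde Y$ (a draw from $f_X$), and if $\tilde Y=a$ release an independent draw from $g_X$. Since $\tilde Y=a$ with probability $k$, the output law is exactly $(1-k)f_X + k\,g_X$, and no new Markov-chain analysis is needed. Your set-atom route could perhaps be made to work, but the paper's construction makes the existence claim and the removal of the $(1-k)^{-1}$ factor immediate.
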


\begin{algorithm}
\SetAlgoLined
\KwResult{$Y \sim \tilde{f}_X$ as in Equation \ref{eq:randatommix}}
\KwIn{ Sample space $\mathcal{Y}$ and loss function $L_x$ satisfying the conditions of Corollary \ref{cor:randatom} } 
Sample $\tilde{Y}$ using Algorithm \ref{alg:perfatom}. \\
\eIf{$\tilde{Y} \neq a$}{ Release $\tilde{Y}$. }{ Release $Y \sim g_X$}
\caption{\texttt{RandomAtomPerfect}: $\epsilon$-DP sample from mixture of discrete and continuous exponential mechanisms}
\label{alg:randatom}
\end{algorithm}

One potential issue with both algorithms \ref{alg:perfatomcomplete} and \ref{alg:randatom} is that $N_{\mathrm{Inner}}$ contains confidential information about $X$, presenting itself as a potential form of side leakage. However, following \cite{awan2021privacy}, we can exploit the memorylessness property of the Geometric distribution to release a result whose runtime is independent of the data (at a cost due to runtime). This is implemented in Corollary \ref{cor:confatomruntime}.

\begin{corollary}[\texttt{ConfAtomAndRuntimePerfect}]
\label{cor:confatomruntime}In the same setting as Theorem \ref{thm:confatom}, there exists a modification of Algorithm \ref{alg:perfatomcomplete} where:
\begin{enumerate}
    \item The total number of proposed samples (either real or partially artificial) $N_{\mathrm{prop}}$ is independent of the data (i.e., satisfies $0$-DP). 
    \item The expected runtime of the algorithm is upper bounded by:
    $$
    \E[N_{\mathrm{prop}}] \leq \frac{48}{k^2 (1 - k)^2 \eta},
    $$
    Where:
    $$
    \eta \triangleq \frac{1}{2} \inf_{x \in \mathcal{X}^n} \inf_{y \in \mathcal{Y}} p_{\mathrm{Accept}}(y)
    $$
\end{enumerate}
\end{corollary}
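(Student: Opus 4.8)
The plan is to neutralize the only data-dependent sources of randomness in the runtime of Algorithm~\ref{alg:perfatomcomplete} by replacing their (data-dependent) Geometric rates with a single data-independent surrogate, exploiting the memorylessness of the Geometric law exactly as in \cite{awan2021privacy}. First I would revisit the runtime decomposition from the proof of Theorem~\ref{thm:confatom}: the outer count $M \sim \mathrm{Geometric}(\beta)$ with $\beta = k/2$, and the nonatomic resampling count, depend only on $k$; the genuinely data-dependent pieces are the inner rejection count $N_{\mathrm{Inner}}$ and the Bernoulli-factory count $N_{\mathrm{Bern}}$, both of which are driven by $p_{\mathrm{Accept}}(y)$ and $\tilde{\Pi}_X(y,\{a\})$ and hence carry information about $X$. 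I would argue that these are the \emph{only} channels through which $N_{\mathrm{prop}}$ can differ across adjacent databases.

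The core device is a thinning coupling. Set the data-independent rate $\eta \triangleq \tfrac{1}{2}\inf_{x \in \mathcal{X}^n}\inf_{y \in \mathcal{Y}} p_{\mathrm{Accept}}(y)$, where the factor $\tfrac12$ supplies the slack guaranteeing that the induced stopping probability never exceeds the true per-proposal acceptance probability for any $X$ or $y$. In the inner loop, at each proposed sample I would detect the usual acceptance event and, conditionally on acceptance, declare the step a \emph{stopping} step with the residual probability that makes the unconditional per-proposal stopping probability equal to the data-independent value $(1-k)\eta$; this forces $\{\mathrm{stop}\} \subseteq \{\mathrm{accept}\}$, so a valid accepted draw is always in hand whenever the process stops. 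Because successive rejection proposals are i.i.d. from the inner target $R_{\xi_a,\tilde{\Pi}_X(\cdot,\{a\})}(y,\cdot)$, selecting the draw realized at this (geometrically distributed) stopping time leaves the output law unchanged, so Algorithm~\ref{alg:perfatomcomplete} still returns an exact $Y \sim f_X$ and its $\epsilon$-DP guarantee is untouched. The same padding is then applied to any residual data-dependent sub-counts so that every Geometric rate entering $N_{\mathrm{prop}}$ is a function of $\eta$, $k$, and $\beta$ alone.

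Part~(1) then follows because, under this coupling, the law of $N_{\mathrm{prop}}$ is a fixed functional of data-independent Geometric variables; it is identical for every $X$, so $\mu_X(B) = \mu_{X'}(B)$ for all measurable runtime events $B$ and the runtime channel is $0$-DP. For part~(2), I would re-run the product bookkeeping of Theorem~\ref{thm:confatom} with the inner expectation now equal to $((1-k)\eta)^{-1}$ and the outer, Bernoulli-factory, and nonatomic expectations $2/k$, $24/k$, and $(1-k)^{-1}$ unchanged; their product yields $\E[N_{\mathrm{prop}}] \le \tfrac{48}{k^2(1-k)^2 \eta}$, exhibiting the promised runtime inflation by the factor $\inf_y p_{\mathrm{Accept}}(y)/\eta \ge 2$ relative to Theorem~\ref{thm:confatom}.

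I expect the main obstacle to be twofold. First, verifying unbiasedness under the thinning coupling requires care: I must confirm that conditioning on the engineered stopping step does not bias the accepted draw, which rests on the i.i.d. structure of successive proposals together with the memorylessness of the Geometric, and that the padded ``phantom'' proposals performed after the true acceptance never alter the released sample. Second, and more delicate, is extending the same neutralization to $N_{\mathrm{Bern}}$: the Huber Bernoulli-factory iteration count is not itself Geometric and depends on $X$ through its target probability, so making the \emph{total} $N_{\mathrm{prop}}$ exactly data-independent requires either absorbing these iterations into the common $\mathrm{Geometric}$-padded envelope or bounding them by a data-independent worst case, and checking that this absorption preserves both exactness of the output and the stated expected-runtime bound.
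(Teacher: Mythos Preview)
Your proposal uses the same device as the paper: pad the data-dependent inner rejection count via the Geometric memorylessness trick of \cite{awan2021privacy}, replacing its rate by a worst-case data-independent one, and then rerun the product bookkeeping from Theorem~\ref{thm:confatom}. The paper's proof is terser about the mechanics: it simply writes $\tilde{N}_{\mathrm{Inner}} = N_{\mathrm{Inner}} + (1-Z)\,N_{\mathrm{Wait}}$ with $Z \sim \mathrm{Bernoulli}\bigl(\inf_{X}\inf_{y} p_{\mathrm{Accept}}(y) / \inf_{y} p_{\mathrm{Accept}}(y)\bigr)$ and $N_{\mathrm{Wait}} \sim \mathrm{Geometric}\bigl(\inf_{X}\inf_{y} p_{\mathrm{Accept}}(y)\bigr)$, obtains $\E[\tilde{N}_{\mathrm{Inner}}] \le 2/\bigl((1-k)\inf_{X}\inf_{y} p_{\mathrm{Accept}}(y)\bigr) = 1/\bigl((1-k)\eta\bigr)$, and substitutes this into the Theorem~\ref{thm:confatom} product. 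Your ``thinning'' description and the paper's ``append a wait'' description are two presentations of the same coupling.

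Where you go further is in flagging $N_{\mathrm{Bern}}$. The paper explicitly asserts that $N_{\mathrm{Inner}}$ is \emph{the} data-dependent component and leaves the Bernoulli-factory count alone, retaining the data-independent \emph{bound} $\E[N_{\mathrm{Bern}}] \le 24/k$. Your concern is legitimate: the realized Huber factory iteration count depends on its target probability $\bigl(1-\tilde{\Pi}_X(y,\{a\})\bigr)/(1-\beta)$, which does carry $X$, so strictly speaking the paper's claim that only $N_{\mathrm{Inner}}$ needs neutralizing is incomplete. Your proposal to absorb $N_{\mathrm{Bern}}$ into the same padded envelope is the natural fix and does not disturb the stated bound, since the $24/k$ factor is already worst-case; the paper simply does not carry out this step.
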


\section{Examples}

\subsection{\texorpdfstring{$d$}{d}-dimensional mean with \texorpdfstring{$L_1$}{L1} loss: comparing \texorpdfstring{$\epsilon$}{epsilon}-DP and \texorpdfstring{$(\epsilon, \delta)$}{epsilon-delta}-DP guarantees}

As a first example, let $\mathcal{Y} = [ 0, 1 ]^d$, $L_X(y) \triangleq \norm{y - \overline{X}}_1$, and $\nu$ be a uniform measure over $\mathcal{Y}$. This is analogous to the multivariate Laplace mechanism but with bounded output space matching the bounded inputs we expect from $\mathcal{X}$. Note that although we can implement exact samplers in this scenario, the goal of this example is to illustrate the effect of proposal choice on runtime and privacy costs using MCMC versus using Algorithm \ref{alg:perfatomcomplete}. Derivation of all example results are in the appendix.

We consider two different proposals for MCMC. For the first algorithm, we derive a Metropolis-Hastings Markov chain constructed with proposal independently drawn uniformly over $[0, 1]^d$. The slowest-mixing chain is easily identifiable using results from \cite{mengersen1996rates}:
\begin{equation}
\label{eq:ind_mix}\norm{\mu_{X,m} - \mu_X}_{TV} \leq (1 - \beta_{\mathrm{MCMC,unif}})^m,
\end{equation}
where,
$$
\beta_{\mathrm{MCMC,Unif}} \triangleq \left(\frac{2d}{\epsilon n} (1 - e^{-\epsilon n / 2d}) \right)^{d}.
$$
Derivation of this result is in the appendix. This allows us to immediately construct $(\epsilon, \delta)$-DP implementations of the exponential mechanism. Note that, using the main result in $\cite{wang2020exact}$, Equation $\ref{eq:ind_mix}$ is constant-sharp and cannot be improved.

Similarly, we can use Laplace proposals of the form:
$$
q_X(y, y') \propto \exp\left(- \alpha \norm{y - y'}_{L_1} \right),
$$
for which we have a new rate of convergence given analogously by:
$$
\beta_{\mathrm{MCMC,Lap}} \triangleq (2\alpha)^d \exp\left(- \left( \alpha d  + \frac{\epsilon n}{2} \right)\right) \left( \frac{1}{\alpha}(1 - e^{-\alpha}) \right)^d.
$$

The runtime for Algorithm \ref{alg:perfatomcomplete} depends on the confidential data. To demonstrate this, we simulate $\overline{X}$ from three different Beta distributions: one uniform ($\overline{X} \sim \mathrm{Beta}(1,1)$), one with unimodal concentration ($\overline{X} \sim \mathrm{Beta}(10, 10)$), and one with concentration at the boundary ($\overline{X} \sim \mathrm{Beta}(.1, .1)$) and compare the distribution of $N_\mathrm{prop}$ across the cases in Figure \ref{fig:case1_nprop}. All simulations are implemented in R \cite{rteam} and figures were produced using ggplot2 \cite{ggplot2} under GNU GPL v3.

The runtime results presented in figures \ref{fig:case1_mcmc} and \ref{fig:case1_nprop} illustrate the phenomena in the discussion for Theorem \ref{thm:confatom}. For the MCMC implementations in Figure \ref{fig:case1_mcmc}, even though we expect the Laplace proposals to mix faster in general, the worst case mixing time is actually better for the independent uniform proposals. For example, at $n=100$ and $\epsilon = .01$, we require a Laplace proposal chain 10 times longer than the equivalent uniform proposal chain to achieve a computationally negligible $\delta$. Alternatively for the perfect sampler in Figure \ref{fig:case1_nprop}, the runtime for the perfect sampler decreases when the data is more likely to be concentrated at the center of the output space (as in $\mathrm{Beta}(10, 10)$) as opposed to the boundary of the space (as in $\mathrm{Beta}(.1, .1)$).

\begin{figure}
  \centering
  \includegraphics[width=.8\textwidth]{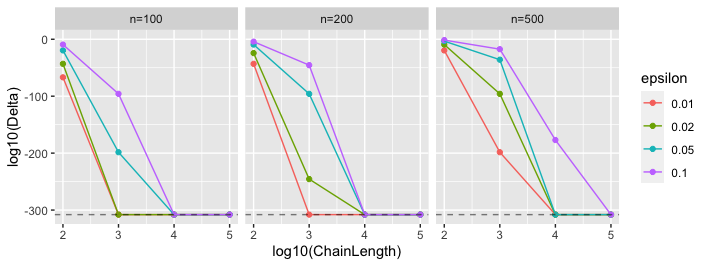}
  \includegraphics[width=.8\textwidth]{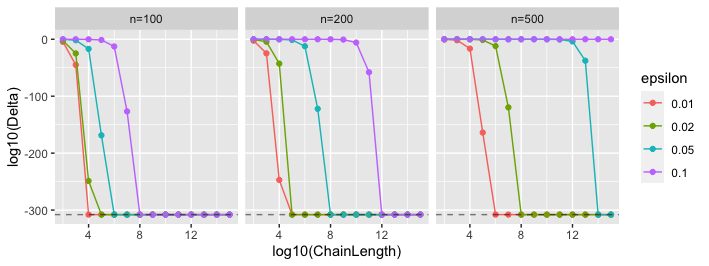}
  \caption{Delta costs incurred due to MCMC implementation with uniform independent proposals and symmetric Laplace proposals, top and bottom respectively. Dashed line corresponds to smallest 64-bit double floating point precision; note this does NOT mean the resulting algorithm is $\epsilon$-DP, but instead that the $\delta$ incurred is smaller than the smallest floating-point value reportable by the machine.}
  \label{fig:case1_mcmc}
\end{figure}

\begin{figure}
    \centering
    \includegraphics[width=.95\textwidth]{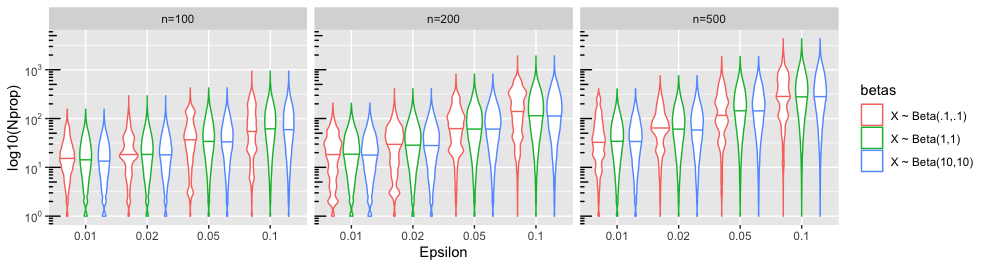}
    \caption{Distribution of realized $N_\mathrm{prop}$ for Example 1 with symmetric Laplace proposals at $\alpha = n \epsilon / 2$, $d=1$, and different distributions on $\overline{X}$ (line represents median). As expected, run-time increases as both $n$ and $\epsilon$ increases.}
    \label{fig:case1_nprop}
\end{figure}

\subsection{Ridge regression K-norm gradient mechanism: comparing confidential and random atoms}

Next, we consider Ridge regression \cite{friedman2001elements}. We will use the K-norm gradient mechanism following \cite{reimherr2019kng}, a variant of the original exponential mechanism whose noise is asymptotically negligible as $n \to \infty$. In the interest of keeping consistent notation, let:
$$
\mathcal{Y} \triangleq \{ y \in \mathbb{R}^p \mid \norm{y}_1 \leq B \}.
$$
Let $X \in [-1, 1]^{n \times p}$ and $Z \in [-1, 1]^n$. Define the original loss function $\ell_X(y)$:
$$
\ell_X(y) = \frac{1}{2} \norm{Z - Xy}_2^2 + \frac{\lambda}{2} \norm{y}_2^2.
$$
For any two $X, X'$ that differ on one element and all $y \in \mathcal{Y}$:
$$
\norm{\nabla \ell_X(y) - \nabla \ell_{X'}(y)}_1 \leq 2(1 + B) \sup_{x \in [0,1]^p} \norm{x}_1 \leq 2(1 + B)p
$$
This yields a final mechanism given by:
\begin{equation}
\label{eq:ridge_exp}
f_X(y) \propto \exp\left(-\frac{\epsilon}{4(1 + B)p} \norm{X^T(Xy - Z) + \lambda y}_1 \right) \ind{ \norm{y}_1 \leq B } .
\end{equation}
Because of similarities between functional forms, the exact sampling procedures will behave the same as the analogues in the previous example. In particular, note that the worst-case runtime calculation does not depend on the integration constant in Equation \ref{eq:ridge_exp} when the proposals $q_X$ are symmetric, since: 
\begin{align*}
p_{\mathrm{Accept}}(y) = \int_{\norm{y}_1 \leq B} q_X(y, y') \min\left\{ 1,  \frac{\exp\left( -\frac{\epsilon}{4(1+B)p} \norm{X^T (Xy' -Z) + \lambda y'}_1 \right) }{\exp\left( -\frac{\epsilon}{4(1+B)p} \norm{X^T (Xy-Z) + \lambda y}_1 \right)} \right\} d\nu(y').
\end{align*}
This property is not shared by the MCMC estimator, demonstrating an advantage of this procedure. 

Our main goal is to compare the utilities of Algorithms \ref{alg:perfatomcomplete} and \ref{alg:randatom}, meaning we need to quantify the utility cost of discretizing the output space with some probability $k$. To analyze this, we simulate data from a Ridge regression model with $p=5$ and randomly sample $\ell$ points from the $L_1$ ball of radius $B=1$ on which to implement the discrete exponential mechanism. By varying $\ell$ and $\epsilon$ and repeating the experiment $N_\mathrm{exp}=1000$ times, we calculate:
$$
\mathrm{Err}(\epsilon, \ell, \varepsilon) \triangleq Q_{.05} \{ P(L_X(y)^{(n_\mathrm{exp}, \epsilon, \ell)} \geq \varepsilon ) \}.
$$
where $Q_{.05}$ is the 5\% quantile across experiment replications. This captures with approximately 95\% confidence a kind of ``worst case" utility bounds by choosing a particular discretization at random for our Ridge regression. The results of the experiment are in Figure \ref{fig:ex2_err} with full simulation details in the appendix. Since the discrete exponential mechanism utility is a Monte Carlo estimate of the continuous exponential mechanism utility, we see as expected that the utility increases to the continuous upper bound as $\ell \to \infty$. For our particular example, this suggests that we could choose $\ell$ large enough that the utility of the discrete mechanism is close to that of the original exponential mechanism with high probability. However, further investigation is needed to understand these effects in higher dimensions or for more complicated loss functions.

\begin{figure}
    \centering
    \includegraphics[width=.9\textwidth]{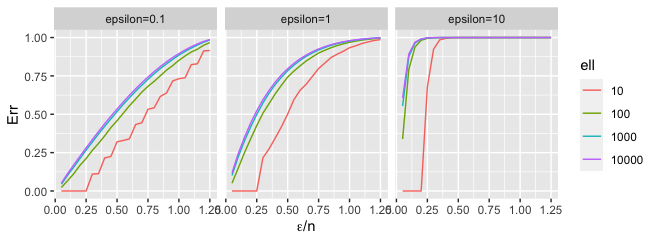}
    \caption{Utility of the discrete exponential mechanism using $\ell$ points sampled uniformly from the support of the distribution in Equation \ref{eq:ridge_exp}.}
    \label{fig:ex2_err}
\end{figure}

\section{Discussion}

We proposed some techniques for implementing the exponential mechanism with exact privacy guarantees and some pros and cons of the various approaches above. We hope that techniques like these can be considered when practitioners are either unwilling to incur a $\delta$ cost or are otherwise worried about the leakage of sensitive information due to disclosure of implementation details. Additionally, our corollaries and simulations demonstrate new ways of trading off privacy, utility, and now run-time, whereas traditionally only the trade-off between the first two was considered. This can help practitioners quantitatively negotiate the realized costs of implementing more generic DP algorithms. 

\subsection{Limitations}
\label{s:limitations}
As with any real-valued private computation, we must be careful about the complications of floating-point arithmetic and other side channel vulnerabilities \cite{mironov2012significance}. Just because a machine cannot report a $\delta$ cost below its floating point precision does not mean the system is invulnerable. In particular, runtime issues pose a practical threat to privacy preservation when using these mechanisms; longer runtimes tend to correspond to rarer events, leading to indirect leakage. This implies our methods are best used for non-interactive data analyses, in which runtime information leakage is a lesser concern. Future work could address methods of masking or privatizing the total runtime in a way that masks how much computation time went into sampling from the target distribution. 

Additionally, we have restricted our examples to relatively well-behaved distributions that allow us to analytically characterize the constants needed for exact sampling. Numerical analysis of these constants are not the primary focus of the literature on perfect sampling, but understanding additional practical applications to estimate or otherwise bound these constants would be worth investigating. Additionally, the methods we propose suffer from curse of dimensionality problems and may become prohibitively expensive for large $d$ as noted in \cite{ganesh2020faster}. Still, many common problems fall into the regime where $d \ll n$ and $\mathcal{X}^n$ is not discrete; this is where we believe our work is most applicable.

Finally, we reiterate that uniform ergodicity of the original chains is not only sufficient but often necessary for the existence of perfect sampling methods with finite expected runtime \cite{lee2014perfect}. However, the minorization constant suffers from curse of dimensionality effects, meaning the runtime will often suffer for high-dimensional results. Feasible and efficient algorithms for perfect sampling in high-dimensional settings remains an open challenge in mathematics. 

\subsection{Future research}

First, understanding the utility trade-off between Algorithms \ref{alg:perfatomcomplete} and \ref{alg:randatom} requires a deeper understanding of generic utility for the exponential mechanism. Although there has been some of work on the asymptotic utility of the exponential mechanism such as \cite{awan2019benefits,foulds2016theory}, surprisingly little work has been done on exact finite sample guarantees. In the special case when the loss function is convex, the exponential mechanism's target density is log-concave, yielding many nice properties in terms of information concentration \cite{bobkov2011concentration}. These should be further investigated to determine when modifications to the target distribution like those proposed here are prohibitive to utility in realized terms.

Additionally, the upper bound provided in Theorem \ref{thm:mcmc_delta} may be loose in practice, and alternative convergence diagnostics besides total variation distance may be used to better quantify the privacy guarantees up to an approximation (see \cite{flegal2008markov} for a comparison of some Monte Carlo error estimation approaches). We expect there to be cases where Langevin dynamics or other sampling techniques with desirable mixing properties in practice could quickly produce mathematically negligible $\delta$ values; the only open question is turning this heuristic into a quantifiable guarantee.

\section{Acknowledgements}
This work is supported by NSF SES-1853209. Thanks to Jordan Awan and Alexei Novikov for helpful discussions about this topic. We declare no conflicts of interest. 

\bibliographystyle{plain}
\bibliography{refs}

\newpage

\appendix

\section{Checklist}

\begin{enumerate}

\item For all authors...
\begin{enumerate}
  \item Do the main claims made in the abstract and introduction accurately reflect the paper's contributions and scope?
    \answerYes{}
  \item Did you describe the limitations of your work?
    \answerYes{See Section \ref{s:limitations}}
  \item Did you discuss any potential negative societal impacts of your work?
    \answerNo{This paper describes techniques that alleviate an existing flaw in applying privacy-preserving methods, which would improve their security in practice. Although there are potential negative impacts of privacy-preserving analysis, such as with fairness and the ability to study small populations, our paper does not exacerbate any already existing issues.} 
  \item Have you read the ethics review guidelines and ensured that your paper conforms to them?
    \answerYes{}
\end{enumerate}

\item If you are including theoretical results...
\begin{enumerate}
  \item Did you state the full set of assumptions of all theoretical results?
    \answerYes{See Assumption \ref{as:minorization} and the setup for Theorem \ref{thm:confatom} and Corollary \ref{cor:randatom}.}
	\item Did you include complete proofs of all theoretical results?
    \answerYes{Sketches are in main document; complete proofs are in supplement.}
\end{enumerate}

\item If you ran experiments...
\begin{enumerate}
  \item Did you include the code, data, and instructions needed to reproduce the main experimental results (either in the supplemental material or as a URL)?
    \answerYes{Code is in the supplement.}
  \item Did you specify all the training details (e.g., data splits, hyperparameters, how they were chosen)?
    \answerYes{Code is in the supplement.}
	\item Did you report error bars (e.g., with respect to the random seed after running experiments multiple times)?
    \answerNo{Figures \ref{fig:case1_nprop} and \ref{fig:ex2_err} are simulations where the goal is precisely to characterize the marginal distributions of interest.} 
	\item Did you include the total amount of compute and the type of resources used (e.g., type of GPUs, internal cluster, or cloud provider)?
    \answerNo{Figures \ref{fig:case1_nprop} and \ref{fig:ex2_err} are primarily for illustrative purposes, and generating the figures does not require external computing resources.}
\end{enumerate}

\item If you are using existing assets (e.g., code, data, models) or curating/releasing new assets...
\begin{enumerate}
  \item If your work uses existing assets, did you cite the creators?
    \answerYes{Cited R and ggplot2}
  \item Did you mention the license of the assets?
    \answerYes{All assets are under GNU GPL v3}
  \item Did you include any new assets either in the supplemental material or as a URL?
    \answerYes{Code for the figures is in the supplement.}
  \item Did you discuss whether and how consent was obtained from people whose data you're using/curating?
    \answerNA{}
  \item Did you discuss whether the data you are using/curating contains personally identifiable information or offensive content?
    \answerNA{}
\end{enumerate}

\item If you used crowdsourcing or conducted research with human subjects...
\begin{enumerate}
  \item Did you include the full text of instructions given to participants and screenshots, if applicable?
    \answerNA{}
  \item Did you describe any potential participant risks, with links to Institutional Review Board (IRB) approvals, if applicable?
    \answerNA{}
  \item Did you include the estimated hourly wage paid to participants and the total amount spent on participant compensation?
    \answerNA{}
\end{enumerate}

\end{enumerate}

\section{Proofs}

\subsection{Proof of Theorem \ref{thm:confatom}}
First note that if there are multiple $a \in \mathcal{Y}$ for which $L_X(a) = 0$, we can sample $a \sim \mathrm{Unif}(\{ a \in \mathcal{Y} \mid L_X(a) = 0 \})$. Note that if $\mu_X(\{ a \in \mathcal{Y} \mid L_X(a) = 0 \}) > 0$, then we can directly apply the main result from \cite{lee2014perfect}. Therefore we instead focus on the case where $\mu_X(\{ a \in \mathcal{Y} \mid L_X(a) = 0 \}) = 0$ in agreement with our assumption. 

Next, the Metropolis-Hastings transition kernel is defined by:
\begin{align*}
\tilde{\Pi}_X(y, A) &= \int_A \tilde{q}_X(y,y') \min\left\{1, \frac{\tilde{f}_X(y') }{\tilde{f}_X(y)} \right\} \ d\tilde{\nu}(y') \ + \\
&\quad \ \ind{y \in A} \left[ 1 - \int_{ \mathcal{Y} } \tilde{q}_X(y,y') \min\left\{1, \frac{\tilde{f}_X(y') }{\tilde{f}_X(y)} \right\} \ d \tilde{\nu}(y') \right],
\end{align*}
Where:
$$
\tilde{q}_X(y,y') \min\left\{1, \frac{\tilde{f}_X(y') }{\tilde{f}_X(y)} \right\} = \frac{1}{2}\left[\ind{y' = a} + q_X(y, y') \right] \min\left\{1, \frac{(1-k)f_X(y') + k \ind{y' = a}}{ (1 - k) f_X(y) + k \ind{y = a}} \right\}
$$
The density $\tilde{f}_X$ is maximized at $a$ by construction; note that this does not depend on the uniqueness of $a$. This implies:
\begin{align*}
\tilde{\Pi}_X(y, \{ a \}) &\geq \int_{\{a \}} \tilde{q}_X(y,y') \min\left\{1, \frac{(1-k)f_X(y') + k }{ (1 - k) f_X(y) + k \ind{y = a}} \right\} \ d\tilde{\nu}(y') \\
&\geq \frac{k}{2} \triangleq \eta > 0 
\end{align*}
Therefore the first condition is met, and we can use Algorithm \ref{alg:perfatom} to perform perfect sampling. Let $N_{\mathrm{Outer}} \sim \mathrm{Geometric}(\eta_1)$ be the length of this outer loop. 

Next, we need to characterize $N_{\mathrm{Bern}}$, the number of Bernoulli factory flips necessary to sample from the Bernoulli distribution in Algorithm \ref{alg:perfatom}. Using the proposed algorithm in \cite{huber2016nearly} and the minorization term above:
$$
\E[N_{\mathrm{Bern}}] \leq \frac{12}{\eta} = \frac{24}{k}
$$

Next, we need to calculate, in the worst possible case, how many inner loop samples $N_{\mathrm{Inner}}$ are necessary to sample from the remainder in Algorithm \ref{alg:perfatom}: 
\begin{align*}
\tilde{\Pi}_X(y, \mathcal{Y} \setminus \{ a \}) &= \int_{ \mathcal{Y} \setminus \{ a \} } \tilde{q}_X(y,y') \min\left\{1, \frac{\tilde{f}_X(y') }{\tilde{f}_X(y)} \right\} \ d\tilde{\nu}(y') \ + \\
&\quad \ \ind{y \in \mathcal{Y} \setminus \{ a \} } \left[ 1 - \int_{ \mathcal{Y} } \tilde{q}_X(y,y') \min\left\{1, \frac{\tilde{f}_X(y') }{\tilde{f}_X(y)} \right\} \ d \tilde{\nu}(y') \right] \\
&\geq (1-k) \int_{\mathcal{Y}} q_X(y,y') \min\left\{1, \frac{f_X(y')}{f_X(y)} \right\} \ d\nu(y') \\
\end{align*}
Define: 
$$
p_{\mathrm{Accept}}(y) \triangleq \int_{\mathcal{Y}} q_X(y,y') \min\left\{1, \frac{f_X(y')}{f_X(y)} \right\} \ d\nu(y')
$$
Then:
$$
\Pi_X(y, \mathcal{Y} \setminus \{ a \}) \geq (1-k) \inf_{y \in \mathcal{Y}} p_{\mathrm{Accept}}(y)
$$

Finally, let $N_{\mathrm{Nonatomic}}$ be the number of runs of Algorithm \ref{alg:perfatom} to yield a perfect sample from the unmodified target distribution $\ref{alg:perfatomcomplete}$. Then $N_{\mathrm{Nonatomic}} \sim \mathrm{Geometric}(1-k)$. Combining all the results, the total number of proposed samples of all kinds $N_\mathrm{Total}$ is bounded above by:
$$
\E[N_\mathrm{Total}] \leq \E[N_\mathrm{Outer}] \E[N_\mathrm{Bern}] \E[N_\mathrm{Inner}] \E[N_\mathrm{Nonatomic}] \leq \frac{48}{k^2 (1 - k)^2 \inf_{y \in \mathcal{Y}} p_\mathrm{Accept}(y)}
$$

\subsection{Proof of Corollary \ref{cor:randatom}}
The expected runtime bound follows immediately from the proof of Theorem $\ref{thm:confatom}$ above. For the utility, recall that for the original exponential mechanism \cite[Lemma 7]{mcsherry2007mechanism}:
$$
\mu_X(S_\varepsilon) \leq \frac{1}{\nu(S_{\varepsilon/2})} \exp\left( -\frac{\epsilon \varepsilon}{4 \Delta_L}\right)
$$

For the mechanism as implemented over the discrete atoms $\{ y^{(l)} \}_{l=1}^\ell$, \cite[Corollary 3.12]{dwork2014algorithmic} show that:
$$
\p\left( \norm{L_X(Y)} \geq \varepsilon \mid Y \in \{ y^{(l)} \}_{l=1}^\ell \right) \leq \exp\left( -\frac{\epsilon}{2 \Delta_L} \left( \varepsilon - \frac{2 \Delta_L}{\epsilon} \log(\ell) \right) \right)
$$
The utility result follows immediately from conditioning on the two mixture components.

\subsection{Proof of Corollary \ref{cor:confatomruntime}}
Using the same notation from the proof of the main theorem, the only modification necessary so that $N_{\mathrm{prop}}$ is $0$-DP is that the data-dependent component, $N_{\mathrm{inner}}$ have a distribution independent of $X$. Using \cite{awan2021privacy} Lemma 17, we can add geometric random noise to $N_{\mathrm{Inner}}$ for any iteration of the inner loop with probability depending on $X$. In particular, we assume an adversary knows a modified $\tilde{N}_{\mathrm{Inner}}$ where:
$$
\tilde{N}_{\mathrm{Inner}} \triangleq N_{\mathrm{Inner}} Z + (1 - Z) (N_{\mathrm{Inner}} + N_{\mathrm{Wait}})
$$
where:
$$
Z \sim \mathrm{Bernoulli}\left( \frac{\inf_{X \in \mathcal{X}^n} \inf_{y \in \mathcal{Y}} p_{\mathrm{Accept}}(y)}{\inf_{y \in \mathcal{Y}} p_{\mathrm{Accept}}(y)} \right), \quad N_{\mathrm{Wait}} \sim \mathrm{Geometric}\left( \inf_{X \in \mathcal{X}^n} \inf_{y \in \mathcal{Y}} p_{\mathrm{Accept}}(y) \right)
$$
Then:
$$
\E\left[ \tilde{N}_{\mathrm{Inner}} \right] \leq \frac{2}{(1-k) \inf_{X \in \mathcal{X}^n} \inf_{y \in \mathcal{Y}} p_{\mathrm{Accept}}(y) }
$$
The corollary result then follows from replacing $\E\left[ N_{\mathrm{Inner}} \right]$ with $\E\left[ \tilde{N}_{\mathrm{Inner}} \right]$ in the the proof for Theorem \ref{thm:confatom}.
\subsection{Derivation for Example 1}

$$
f_X(y) = C_X \ind{y \in [0,1]^d} \exp\left(-\frac{\epsilon n}{2d} \norm{y - \overline{X}}_1 \right)
$$
With integration constant:
$$
C^{-1}_X \triangleq \int_{[0,1]^d} \exp\left( -\frac{\epsilon n}{2d} \norm{y - \overline{X}}_1 \right) \ d\nu(y) 
$$
Independent uniform proposal MCMC sampler:
\begin{align*}
\inf_{X \in \mathcal{X}^n} \inf_{y \in \mathcal{Y}} \frac{q_X(y)}{f_X(y)} &= \left(\sup_{X \in \mathcal{X}^n} \sup_{y \in \mathcal{Y}} f_X(y) \right)^{-1} \\
&= \left(\inf_{X \in \mathcal{X}^n} C_X^{-1} \right)^{-1} \\
&= \prod_{j=1}^d \int_0^1 \exp\left(-\frac{\epsilon n}{2d} |y_j| \right) \ dy_j  \\
&= \left(\frac{2d}{\epsilon n} (1 - e^{-\epsilon n / 2d}) \right)^{d} \triangleq \beta_{\mathrm{MCMC,Unif}}
\end{align*}
Laplace proposal MCMC sampler; first, let $Z \sim \mathrm{MVLaplace}(x, \alpha)$ for $x \in [0, 1]^d$ Then using the previous result:
$$
\p(Z \in [0, 1]^d) \geq \left( \frac{1}{\alpha}(1 - e^{-\alpha}) \right)^d.
$$
Then:
\begin{align*}
Q_X(y, y') &\geq Q_X(\overline{X}, y') \\
&\geq (2\alpha)^d \exp\left(- \left( \alpha d  + \frac{\epsilon n}{2} \right)\right) \left( \frac{1}{\alpha}(1 - e^{-\alpha}) \right)^d \triangleq \beta_{\mathrm{MCMC,Lap}}
\end{align*}
Let $F(\cdot; b)$ be the CDF of the Laplace distribution with scale parameter $b$. Independent uniform proposal perfect sampler:
\begin{align*}
p_{\mathrm{Accept}}(y) &= \int_{\mathcal{Y}} q_X(y,y') \min\left\{1, \frac{f_X(y')}{f_X(y)} \right\} \ d\nu(y') \\
&= \int_{[0, 1]^d} \min\left\{ 1, \exp\left(-\frac{\epsilon n}{2d} \left( \norm{y' - \overline{X}}_1 - \norm{y - \overline{X}}_1 \right) \right) \right\}\ d\nu(y') \\
&\geq \int_{[0, 1]^d} \exp\left(-\frac{\epsilon n}{2d} \left( \norm{y' - \overline{X}}_1 \right) \right) \ d\nu(y') \\
&= \prod_{j=1}^d \int_0^1 \exp\left(-\frac{\epsilon n}{2d} |y_j' - \overline{X}_j| \right) \ dy_j' \\
&= \prod_{j=1}^d \frac{\epsilon n}{4d} \left(F\left( 1 - \overline{X}_j; \frac{\epsilon n}{2d} \right) - F\left( -\overline{X}_j; \frac{\epsilon n}{2d} \right) \right)
\end{align*}

Laplace proposal perfect sampler:
\begin{align*}
p_{\mathrm{Accept}}(y) &= \int_{\mathcal{Y}} q_X(y,y') \min\left\{1, \frac{f_X(y')}{f_X(y)} \right\} \ d\nu(y') \\
&= \int_{[0, 1]^d} (2\alpha)^d \exp\left(-\alpha\norm{y - y'}_1 \right) \min\left\{ 1, \exp\left(-\frac{\epsilon n}{2d} \left( \norm{y' - \overline{X}}_1 - \norm{y - \overline{X}}_1 \right) \right) \right\}\ d\nu(y') \\
&\geq \int_{[0, 1]^d} (2\alpha)^d \exp\left(-\left(\frac{\epsilon n}{2d} + \alpha \right) \left( \norm{y' - \overline{X}}_1 \right) \right) \ d\nu(y') \\
&= (2\alpha)^d \prod_{j=1}^d \int_0^1 \exp\left(-\left(\frac{\epsilon n}{2d} + \alpha\right) |y_j' - \overline{X}_j| \right) \ dy_j' \\
&= (2\alpha)^d \prod_{j=1}^d \left( \frac{\epsilon n}{4d} + \frac{\alpha}{2} \right) \left(F\left( 1 - \overline{X}_j; \frac{\epsilon n}{2d} + \alpha \right) - F\left( -\overline{X}_j; \frac{\epsilon n}{2d} + \alpha \right) \right)
\end{align*}

\subsection{Simulation specification for Example 2}
Constants:
$$
\begin{cases}
n \triangleq = 100 \\
p \triangleq 5 \\
\beta \triangleq (.1, .2, -.3, 0, 0)^T \\ 
\lambda \triangleq 1 \\
\end{cases}
$$
Random variables:
$$
\begin{cases}
\displaystyle X_{ij} \sim \mathrm{Beta}(5, 5) & i \in [n], j \in [p] \\
\displaystyle e_i \sim \mathrm{Beta}(20, 20) & i \in [n] \\
Z_i \triangleq X_{i,\cdot} \beta + (2e_i - 1) & i \in [n]
\end{cases}
$$

\appendix

\end{document}